\documentclass{llncs}
\usepackage{times}
\usepackage{comment,amsfonts,amssymb,amsmath,graphicx,algorithmic,algorithm}

\ifx\pdftexversion\undefined
\usepackage[colorlinks,linkcolor=black,filecolor=black,citecolor=black,urlco
lor=black,pdfstartview=FitH]{hyperref}
\else
\usepackage[colorlinks,linkcolor=blue,filecolor=blue,citecolor=blue,urlcolor
=blue,pdfstartview=FitH]{hyperref}
\fi

\newcommand{\alert}[1]{\textbf{\color{red}
[[[#1]]]}\marginpar{\textbf{\color{red}**}}\typeout{ALERT:
\the\inputlineno: #1}}

\newcommand{\diam}{{\rm diam}}

\newcommand{\mommit}[1]{}
\newcommand{\namedref}[2]{\hyperref[#2]{#1~\ref*{#2}}}
\newcommand{\sectionref}[1]{\namedref{Section}{#1}}
\newcommand{\appendixref}[1]{\namedref{Appendix}{#1}}

\newcommand{\theoremref}[1]{\namedref{Theorem}{#1}}
\newcommand{\defref}[1]{\namedref{Definition}{#1}}

\newcommand{\algref}[1]{\namedref{Algorithm}{#1}}

\newcommand{\lemmaref}[1]{\namedref{Lemma}{#1}}

\usepackage{pdfsync}

\title{Space-Efficient Path-Reporting Approximate Distance Oracles}

\author{Michael Elkin\inst{1} \and Ofer Neiman\inst{1}\thanks{Supported in part by ISF grant No. (523/12) and by the European Union's Seventh Framework Programme (FP7/2007-2013) under grant agreement $n^\circ$303809.} \and Christian Wulff-Nilsen\inst{2}}
\institute{
Department of Computer Science, Ben-Gurion University of the Negev, Beer-Sheva, Israel. Email: \texttt{elkinm,neimano@cs.bgu.ac.il}.
\and
Department of Computer Science, University of Copenhagen. Email: \texttt{koolooz@di.ku.dk},
                  \texttt{http://www.diku.dk/$_{\widetilde{~}}$koolooz/}.
}

\begin{document}

\maketitle

\begin{abstract}
We consider approximate {\em path-reporting} distance oracles, distance labeling and labeled routing with extremely low space requirement, for general undirected graphs. For distance oracles, we show how to break the $n\log n$ space
bound of Thorup and Zwick if approximate {\em paths} rather than distances need to be reported
.
For approximate distance labeling and labeled routing,
we break the previously best known space bound of $O(\log n)$ words per vertex
. The cost for such space efficiency is an increased stretch. 
\end{abstract}

\section{Introduction}


\subsection{Distance oracles}
Given a graph $G = (V,E)$ with edge weights, an {\em approximate distance oracle} for $G$ is a data structure that can report approximate distance queries between vertex
pairs efficiently. For any vertices $u,v\in V$, if $d_G(u,v)$ denotes the shortest path distance from $u$ to $v$ in $G$ and if
$\tilde{d}(u,v)$ denotes the approximate distance output by the oracle, we require that $d_G(u,v)\leq \tilde{d}(u,v)\leq\delta d_G(u,v)$, where $\delta\geq 1$
is the approximation (called also the (multiplicative) {\em stretch} parameter) of the oracle. The goal is to give an approximate distance oracle with small space, query time, stretch, and
(perhaps to a lesser extent) preprocessing time.

Our focus is on undirected graphs as it can be shown that no non-trivial oracles exist for directed graphs \cite{TZ01}. A seminal result in this area is that of Thorup and
Zwick \cite{TZ01}. For any positive integer $k$ and a graph with non-negative edge weights and with $m$ edges and $n$ vertices, they gave an approximate distance oracle with
space $O(kn^{1+1/k})$, stretch $2k-1$, query time $O(k)$, and preprocessing time $O(kmn^{1/k})$. For constant $k$, the trade-off between the first three parameters
is optimal, assuming a widely believed and partially proved girth conjecture of Erd\H{o}s \cite{E63}. For super-constant $k$, small improvements exist. In \cite{W13}, it was shown
how to improve the query time to $O(\log k)$ while keeping the same space, stretch, and preprocessing. More recently, Chechik \cite{C14} further improved this to $O(1)$ query time.
Mendel and Naor \cite{MN06} gave an oracle with $O(n^{1+1/k})$ space and $O(1)$ query time at the cost of a constant-factor increase in stretch.

So far, we have only discussed queries for approximate \emph{distances} but it is natural to require the data structure to also be able to report corresponding
\emph{paths}.
We say that an oracle is {\em path-reporting} if it can report those paths in time proportional to their lengths (in addition to the query time needed for distances), and we say that it is a not path-reporting oracle otherwise.
The oracle of \cite{TZ01} and the $O(\log k)$ query time oracle of \cite{W13} are path-reporting, but this is not the case for the oracle of Mendel and Naor \cite{MN06}
nor for the oracle of Chechik which uses their oracle as a black box. Note that a space requirement of order $kn^{1+1/k}$ is
$\Omega(n\log n)$ for any choice of $k$. In this paper, we focus on path-reporting distance oracles 
that use $o(n \log n)$ space, albeit at the price of increased stretch.

\subsection{Distance Labeling}
A natural variant of distance oracles arises when we start to distribute the data structure among all vertices.
Consider a graph problem involving queries for pairs of vertices. In a \emph{labeling scheme} for this problem, the goal is to assign as short labels as possible to
each vertex of the input graph so that a query for any pair $(u,v)$ of vertices can be answered (preferably efficiently) exclusively from the labels assigned to $u$
and $v$. We are interested in a {\em distance labeling} scheme where given labels of two vertices $u$ and $v$, a distance estimate $\tilde{d}(u,v)$ that satisfies
$d_G(u,v) \le \tilde{d}(u,v) \le \delta \cdot d_G(u,v)$ can be efficiently computed.

Distance labeling was introduced in a pioneering work by Peleg \cite{P00a}. 
The distance oracles of Thorup and Zwick \cite{TZ01} and their refinements \cite{W13,C14} can serve as distance labeling schemes as well. (The maximum label size becomes $O(n^{1/k}\cdot\log^{1-1/k}n)$ words, and other parameters stay intact.) This is, however, not the case for Mendel-Naor's oracle \cite{MN06}.

To summarize, {\em all} existing distance labeling schemes use $\Omega(\log n)$ words per label in the worst case. The labeling scheme that we devise in the current paper uses $o(\log n)$ words per label, for graphs with polynomially bounded diameter. On the other hand, its stretch guarantee is much larger than that of \cite{TZ01,W13,C14}.

\subsection{Labeled Routing}
In a closely related {\em labeled routing} problem we want to precompute two pieces of information for every vertex $u$ of the input graph. These are the {\em label} of $u$ and the {\em routing table} of $u$. Given a label of another vertex $v$, the vertex $u$ should decide to which neighbor $w$ of $u$  to forward a message  intended for $v$  based on its local routing table and on the label of $v$. Given this forwarded message with the label of $v$, the neighbor $w$ selects one of its own neighbors, and forwards it the message, and so forth. The {\em routing path} is the $u$-$v$-path which will eventually be taken by a message originated in $u$ and intended for $v$. (Assuming that the routing scheme is correct, the path will indeed end in $v$.) The {\em stretch} of a routing scheme is the maximum ratio between a length of a routing $u$-$v$ path and the distance $d_G(u,v)$ between $u$ and $v$, taken over all (ordered) pairs $(u,v)$ of vertices.

Labeled routing problem was introduced in a seminal paper by Peleg and Upfal \cite{PU88}, and it was studied in \cite{C99,EGP98,AP92,ANLP90}.
The state-of-the-art labeled routing scheme was devised by Thorup and Zwick \cite{TZ01a}. It provides stretch $4k-5$ and uses routing tables of size $O(\mathit{polylog}(n) \cdot n^{1/k})$ and labels of size $O(k \cdot {{\log^2 n} \over {\log\log n}})$.

The space usage by current routing schemes is at least logarithmic in $n$ (counted in words; each word is $O(\log n)$ bits). In many settings such space requirement is prohibitively large. In this paper we show a labeled routing scheme in which the space requirement per vertex (both labels and routing tables) can be as small as one wishes, for graphs with diameter at most some polynomial in $n$. On the other hand, similarly to the situation with distance labeling schemes, the stretch guarantee of our scheme is much larger than that of \cite{TZ01a}.

\subsection{Our Results}
We introduce two new data structures that report paths in undirected graphs. All have query time proportional to the length of the returned path.
The first applies to weighted graphs with diameter polynomially bounded
in $n$. For any $t\geq 1$, it reports paths of stretch $O(\sqrt tn^{2/\sqrt t})$ using space $O(tn)$. It may be distributed as a labeling scheme using $O(t)$ space per vertex (or $O(t\log n)$ bits), and the preprocessing time is $O(tm)$. See \theoremref{thm:soc} for the formal statement. \footnote{For arbitrary diameter $\Delta$, the space and preprocessing time increase by a factor of $O(\log_n\Delta)$.}
This data structure can also be modified to provide labeled routing. Specifically, using tables of size $O(t)$ and labels of size $O(\sqrt{t})$ our routing scheme provides stretch $O(\sqrt{t} \cdot n^{2/\sqrt{t}} \cdot \log n)$.

The second data structure is a distance oracle that applies only to unweighted graphs. In one of the possible settings, it can provide for any parameters $k\ge 1$ and $\epsilon>0$, a path-reporting distance oracle with stretch $O(kn^{1/k}\cdot(k+n^{\epsilon/k}))$, using space $O(kn/\epsilon)$ and preprocessing time $O(kmn^{1/k})$. See \theoremref{thm:multi}, and also \theoremref{thm:oracle} for more possible tradeoffs.

To our knowledge, our distance labeling and labeled routing schemes are the first that use $o(\log n)$ words per vertex. Our distance oracles are the first path-reporting oracles for general graphs that use space $o(n \log n)$.


\subsection{Overview of Techniques.}  
Our first oracle is based on a collection of sparse covers. Roughly speaking, a sparse cover for radius $\rho$ has two parameters: $\beta$ is the {\em radius blow-up}, and $s$ the {\em overlap}. The cover is a collection of clusters, each of diameter at most $\beta\rho$, such that every ball of radius $\rho$ is fully contained in at least one cluster, and every vertex is contained in at most $s$ clusters (see \defref{def:cover} below for a formal definition). Sparse covers were introduced
by \cite{AP90}, and found numerous applications in distributed algorithms and routing (see, e.g. \cite{PU88,AP90a,P93,AP95,AGMNT08}). For the application to distance oracles and labeling schemes,
the radius blow-up corresponds to stretch and the overlap to space. The standard construction of \cite{AP90} for parameter $k\ge 1$ has radius blow-up $k$ and overlap $O(kn^{1/k})$. This overlap is at least $\Omega(\log n)$, and translates to such space usage per vertex. Here we show that one can obtain the inverse parameters: radius blow-up $O(kn^{1/k})$ with overlap $2k$ (in fact we can obtain overlap $(1+\epsilon)k$ for any fixed $\epsilon>0$).

Our first construction of a distance labeling scheme is very simple: it uses a collection of such sparse covers for all distance scales, and maintains a shortest-path tree for each cluster. In order to answer a path query, one needs to find an appropriate cluster in the right scale, and return a path from the corresponding tree.

Our second data structure combines sparse covers with a variation on the Thorup-Zwick (TZ) distance oracle. In order to save space, the "bunches" of the TZ oracle are kept only for a small set of carefully selected vertices. Furthermore, the TZ trees (from which the path is obtained) are pruned to contain only few important vertices. Given a path query, our pruned TZ oracle can only report a "skeleton" of the approximate shortest
path in the original graph. This skeleton contains few vertices (roughly one vertex per $p$ steps, for some parameter $p$). We then use a sparse cover to "fill in" the gaps in the path, which induces additional stretch.

\subsection{Related Work}
There has been a large body of work on distance oracles, labeling and routing for certain graph families (planar, excluded-minor, etc.) and bounded doubling dimension metrics \cite{T01,HM06,AG06,KKS11}. In these settings the stretch factor is usually $1+\epsilon$, which cannot be obtained with $o(n^2)$ space for general graphs.

For sparse graphs, very compact distance oracles were recently devised by Agarwal et al. \cite{AGH11,AGH12}.
They devise two types of distance oracles. One of them has small stretch but requires large space. This distance oracle is indeed path-reporting, but due to their large space requirement they are irrelevant to the current discussion. The other type of distance oracles in \cite{AGH11} has stretch at least 3. These latter distance oracles are very sparse, but they are not path-reporting. \footnote{The paper erroneously claims that they are \cite{AGH-p}.}

\subsection{Organization of the Paper}

After some basic definitions in \sectionref{sec:pre}, we introduce sparse covers with small overlap in \sectionref{sec:cover}. Our first data structure for weighted graphs with diameter polynomially bounded
in $n$, is presented in \sectionref{sec:simple}, and its adaptation for compact routing in \appendixref{sec:route}. The second data structure with improved parameters for unweighted graphs is given in \sectionref{sec:coverTZ}.

\section{Preliminaries}\label{sec:pre}
Let $G=(V,E)$ be an undirected weighted graph, with the usual shortest path metric $d_G$. We always assume the minimal distance in $G$ is 1. For a subset $U\subseteq V$ let $G[U]$ denote the induced graph on $U$.
For a parameter $\rho>0$, and two sets of balls ${\cal B}, {\cal S}\subseteq\{B(v,\rho)\mid v\in V\}$,
define $\partial_{\cal B}({\cal S})=\{B\in{\cal B}\mid \exists S\in{\cal S},~B\cap S\neq\emptyset\}$ to be the subset of balls from ${\cal B}$ that intersect with a ball from ${\cal S}$.

\begin{definition}\label{def:cover}
A collection of clusters ${\cal C}=\{C_1,\dots,C_t\}$ is called a strong diameter {\em $(\beta,s,\rho)$-sparse cover} if
\begin{itemize}
\item {\em Radius blow-up:} $\diam(G[C_i])\le\beta\rho$ for all $i\in[t]$.
\item {\em Padding:} For each $v\in V$, there exists $i\in[t]$ such that $B(v,\rho)\subseteq C_i$.
\item {\em Overlap:} For each $v\in V$, there are at most $s$ clusters in ${\cal C}$ that contain $v$.
\end{itemize}
For a vertex $v$ and a cluster $C_i$ such that $B(v,\rho) \subseteq C_i$, we say that the vertex $v$ is {\em padded} by the cluster $C_i$.
\end{definition}

\section{Sparse Covers with Small Overlap}
\label{sec:cover}



In this section we show how to construct a sparse cover with arbitrarily low overlap. Our construction  essentially inverts the parameters in the classical tradeoff of \cite{AP90}, which has low radius blow-up. We use a region growing technique on the set of balls of radius $\rho$.

\begin{algorithm}
\caption{$\texttt{Sparse-Cover}(G,\rho,k)$}\label{alg:cover}
\begin{algorithmic}[1]
\STATE ${\cal C}=\emptyset$.
\STATE ${\cal U}=\{B(v,\rho)\mid v\in V(G)\}$.
\WHILE {${\cal U}\neq\emptyset$}
\STATE ${\cal R}={\cal U}$.
\WHILE {${\cal R}\neq\emptyset$}
\STATE Let $B\in {\cal R}$.
\STATE Let ${\cal S}=\{B\}$.
\WHILE {$|\partial_{\cal R}({\cal S})| \ge |{\cal S}|\cdot\left(1+\frac{\log n}{k\cdot n^{1/k}}\right)$}
\STATE ${\cal S}\leftarrow \partial_{\cal R}({\cal S})$.
\ENDWHILE
\STATE ${\cal C}\leftarrow {\cal C}\cup \{ \bigcup_{B' \in {\cal S}} B'\}$.
\STATE /* A new cluster  $C = \bigcup_{B' \in {\cal S}} B'$ is added to ${\cal C}$. */
\STATE ${\cal R}\leftarrow {\cal R}\setminus \partial_{\cal R}({\cal S})$.
\STATE ${\cal U}\leftarrow {\cal U}\setminus {\cal S}$.
\ENDWHILE
\ENDWHILE
\end{algorithmic}
\end{algorithm}

\begin{theorem}
For any weighted graph $G$ on $n$ vertices, any $\rho>0$ and $k\ge 1$, there exists a strong diameter $(8k\cdot n^{1/k},2k,\rho)$-sparse cover.
\end{theorem}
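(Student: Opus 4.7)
The plan is to verify the three conditions of \defref{def:cover}: padding, overlap, and the strong-diameter blow-up. Padding is the easiest: a ball $B(v,\rho)$ is removed from $\mathcal{U}$ only when it enters some set $\mathcal{S}$, and at that moment the freshly added cluster $\bigcup_{B' \in \mathcal{S}} B'$ contains $B(v,\rho)$. Since the outer loop runs until $\mathcal{U}$ is empty, every ball is eventually absorbed into some $\mathcal{S}$, and so every vertex $v$ is padded by that cluster.

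For the diameter bound, I would track the number $\ell$ of inner growth steps $\mathcal{S} \leftarrow \partial_{\mathcal{R}}(\mathcal{S})$ performed in forming a single cluster $C$. The while-test forces a factor-$(1{+}x)$ growth at each step, with $x = \log n / (kn^{1/k})$, and $|\mathcal{S}|$ never exceeds $n$, so $\ell \le \log n / \log(1{+}x) = O(kn^{1/k})$ using $\log(1{+}x) \ge x/2$. Two balls $B(u,\rho),B(u',\rho)$ intersect iff $d_G(u,u') \le 2\rho$, and a shortest $u$-$u'$ path then lies entirely inside $B(u,\rho) \cup B(u',\rho) \subseteq C$. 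A straightforward induction on expansion steps shows every ball in the final $\mathcal{S}$ has its center within $G[C]$-distance $2\rho\ell$ from the seed center, yielding $\diam(G[C]) \le 4\rho\ell + 2\rho \le 8k n^{1/k} \rho$.

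The substantive work goes into the overlap bound, which I would split into two parts. First, in any single outer pass, a vertex $v$ joins at most one cluster: if $B_1$ and $B_2$ both contain $v$ then $B_1 \cap B_2 \ne \emptyset$, so once one of them enters some $\mathcal{S}_C$, the other belongs to $\partial_{\mathcal{R}}(\mathcal{S}_C)$ and is deleted from $\mathcal{R}$; every remaining ball around $v$ is likewise either absorbed into $\mathcal{S}_C$ or erased from $\mathcal{R}$, so no later cluster in the same pass can grab $v$. Second, I would bound the number of outer passes by $2k$. In a pass starting with $|\mathcal{U}| = N$, the sets $\partial_{\mathcal{R}_C}(\mathcal{S}_C)$ partition the initial $\mathcal{R}$, and the stopping condition gives $\sum_C |\mathcal{S}_C| \ge N/(1{+}x)$. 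Hence $|\mathcal{U}|$ shrinks by a factor of $x/(1{+}x)$ per pass, so the total number of passes is at most $\log n / \log((1{+}x)/x) + O(1)$.

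The main obstacle I anticipate is converting that last quantity into the clean bound $2k$. After substituting $x = \log n / (k n^{1/k})$ one has $\log((1{+}x)/x) \approx \log(k n^{1/k}/\log n)$, which behaves differently for $k = O(1)$, $k$ moderate, and $k = \Theta(\log n)$ (where $n^{1/k}$ is $\Theta(1)$); a short case analysis verifies $T \le 2k$ in each regime. Once this is done the three conditions combine to give the desired $(8kn^{1/k}, 2k, \rho)$-sparse cover, mirroring the classical region-growing argument of \cite{AP90} but with the roles of radius blow-up and overlap interchanged.
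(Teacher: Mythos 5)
Your plan follows the same three-part structure as the paper's proof, and two of the three parts are essentially identical to it. Padding: a ball leaves $\mathcal{U}$ only once it is inside some $\mathcal{S}$, hence inside the resulting cluster. Diameter: you bound the number of inner expansion steps $\ell$ via the forced multiplicative growth of $|\mathcal{S}|$ and $|\mathcal{S}|\le n$, getting $\ell < 2kn^{1/k}$, exactly as the paper does (it phrases this as a proof by contradiction, $(1+x)^{2kn^{1/k}} > n$, with $x = \log n/(kn^{1/k})$); each step can add at most $4\rho$ to the diameter, so the bound follows. For overlap you also mirror the paper: one cluster per outer pass, and a geometric decay of $|\mathcal{U}|$ across passes. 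Your decay factor $x/(1+x)$ is in fact a slightly tighter version of the paper's factor $x$ (the paper just uses $|\partial_{\mathcal{R}}(\mathcal{S})\setminus\mathcal{S}| < x\,|\mathcal{S}|$).

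The gap sits exactly where you flag it: you assert that "a short case analysis verifies $T\le 2k$ in each regime" but do not carry it out, and this is the crux of the overlap bound. The paper avoids any regime split with a one-line computation that you should substitute for the hand-wave. Since $x/(1+x) < x$, your recurrence gives
\[
n_{2k} \;<\; n\cdot x^{2k} \;=\; n\cdot\left(\frac{\log n}{kn^{1/k}}\right)^{2k} \;=\; \frac{1}{n}\left(\frac{\log n}{k}\right)^{2k}.
\]
The function $k\mapsto ((\log n)/k)^{2k}$ attains its maximum at $k=(\log n)/e$ (not $(\log n)/2$ as the paper writes, but the conclusion is unaffected), where its value is $e^{2(\log n)/e} = n^{2/e} < n$. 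Hence $n_{2k}<1$, i.e.\ the algorithm terminates within $2k$ passes, and the overlap bound follows without any case analysis. (A minor cosmetic point: your tally $\diam(G[C])\le 4\rho\ell + 2\rho$ exceeds $8kn^{1/k}\rho$ by $2\rho$ if you use $\ell\le 2kn^{1/k}$; using the strict inequality $\ell < 2kn^{1/k}$ absorbs the extra $2\rho$, as the paper implicitly does.)
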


\begin{proof}
Consider \algref{alg:cover} for creating a sparse cover.  Observe that we only throw a ball from ${\cal U}$ when it is contained in ${\cal S}$ and will surely be contained in a cluster. Thus when the algorithm terminates all $\rho$-balls are padded.

Let $n_i$ denote the number of balls in ${\cal U}$ at the end of the $i$-th iteration of the outer loop. Then $n_0=n$, and by the termination condition of the while loop on line 8,
\[
n_{i+1}< n_i\cdot\left(\frac{\log n}{k\cdot n^{1/k}}\right)~.
\]
This implies that
\[
n_{2k}< n\cdot\left(\frac{\log n}{k\cdot n^{1/k}}\right)^{2k}=\frac{1}{n}\cdot\left(\frac{\log n}{k}\right)^{2k}\le 1~,
\]
where the last inequality holds because the function $((\log n)/k)^{2k}$ is maximal when $k=(\log n)/2$, in which case it is $n$. We conclude that the algorithm terminates after at most $2k$ phases. When forming a cluster $C = \bigcup_{B \in {\cal S}} B$, all balls in $\partial_{\cal R}({\cal S})$ (the balls that intersect $C$) are removed from ${\cal R}$ and thus will not be considered in the current phase (the loop starting at line 5), which implies that every point $v\in V(G)$ can belong to at most a single cluster per phase. So the total overlap is at most $2k$.

It remains to bound the strong diameter of any cluster. A cluster ${\cal S}$ starts as a ball of diameter at most $2\rho$, and in each iteration of line 8 its size (number of balls it contains) increases by a factor of at least $\left(1+\frac{\log n}{k\cdot n^{1/k}}\right)$. After $2k\cdot n^{1/k}$ iterations its size will be at least
\begin{equation}
\label{eq:num_iter}
\left(1+\frac{\log n}{k\cdot n^{1/k}}\right)^{2k\cdot n^{1/k}}>e^{\log n}> n~.
\end{equation}
For the last inequality we used that $1+x> e^{x/2}$ when $0<x\le 1$ (indeed $(\log n)/(k\cdot n^{1/k})\le 1$).
Inequality \eqref{eq:num_iter} is a contradiction, so the number of iterations in line 8 is less than $2k\cdot n^{1/k}$. In each such iteration the diameter can increase by at most $4\rho$, so the total diameter is bounded by $8k\cdot n^{1/k}\cdot\rho$.
\end{proof}

\paragraph{Remark:} A similar algorithm and calculation shows that for any $1/k\le\epsilon\le 1$, one can obtain also a $(8k\cdot n^{1/k}/\epsilon,(1+\epsilon)k,\rho)$-sparse cover, though we shall not require this generalization here.
In \appendixref{sec:fast} we show a fast construction of such sparse covers, with slightly worse constants.

\section{Small Space Distance Labeling Scheme}\label{sec:simple}

In this section we provide a distance labeling scheme, that can also serve as a path-reporting distance oracle, which is built from a collection of sparse covers. Its parameters are somewhat inferior to the parameters of the distance oracle from \sectionref{sec:coverTZ}. On the other hand, the latter construction applies only to unweighted graphs, while the construction is this section applies to weighted graphs. Also, it is not clear to us if the construction of \sectionref{sec:coverTZ} can be converted into a distance labeling scheme.

\begin{theorem}\label{thm:soc}
For any weighted graph $G=(V,E)$ on $n$ vertices with diameter $\Delta$, and any $t\ge 1$, there exists a distance labeling scheme with stretch $O(\sqrt{t}\cdot n^{2/\sqrt{t}})$ using $O(t\cdot\log_n\Delta)$ space (or $O(t\cdot\log\Delta)$ bits) per vertex, that can be constructed in $O(t|E|\cdot\log_n\Delta)$ time. Furthermore, this data structure can also serve as a path-reporting distance oracle, whose query time is proportional to the length of the returned path, plus $O(\log (t \cdot \log_n\Delta))$.
\end{theorem}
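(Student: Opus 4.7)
The plan is to apply the sparse-cover construction of \sectionref{sec:cover} at a geometric sequence of radii. Set $k=\lceil\sqrt{t}\,\rceil$ and $\alpha=n^{1/\sqrt{t}}$, let $\rho_i=\alpha^i$ for $i=0,1,\ldots,I$ with $I=\lceil\log_\alpha\Delta\rceil=O(\sqrt{t}\,\log_n\Delta)$, and invoke \algref{alg:cover} at each scale to obtain a strong-diameter $(\beta,2k,\rho_i)$-sparse cover $\mathcal{C}_i$, where $\beta=8kn^{1/k}=O(\sqrt{t}\,n^{1/\sqrt{t}})$. For every resulting cluster I will compute a shortest-path tree (with parent pointers and a standard constant-query LCA structure); these trees form the global data structure used for path reporting.

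The label of a vertex $v$ will store, for each scale $i$, the identifiers of the (at most $2k$) clusters in $\mathcal{C}_i$ that contain $v$, together with one distinguished \emph{padding-cluster} identifier $C_v^i\in\mathcal{C}_i$ certifying $B(v,\rho_i)\subseteq C_v^i$ (guaranteed by the padding property). The total label size is $O(kI)=O(t\log_n\Delta)$ words per vertex, which matches the target.

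For a query on $(u,v)$, the plan is to define the predicate $P(i)$ to hold iff $C_u^i$ appears among the cluster identifiers stored in $v$'s label at scale $i$, and to binary-search for the smallest $j$ with $P(j)$ true. Setting $i^{*}=\lceil\log_\alpha d_G(u,v)\rceil$, padding gives $v\in B(u,\rho_{i^{*}})\subseteq C_u^{i^{*}}$, so $P(i)$ is true for every $i\ge i^{*}$. During binary search the right endpoint is only lowered to indices where $P$ has been verified true, and the left endpoint is raised only past indices where $P$ is false (which therefore must lie below $i^{*}$); hence the returned index $j$ satisfies $P(j)$ and $j\le i^{*}$. From $P(j)$ we get $d_G(u,v)\le\diam(G[C_u^j])\le\beta\rho_j$, and from $j\le i^{*}$ we get $\beta\rho_j\le\beta\alpha\cdot\alpha^{i^{*}-1}\le\beta\alpha\cdot d_G(u,v)$, yielding stretch $\beta\alpha=O(\sqrt{t}\,n^{2/\sqrt{t}})$. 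With cluster memberships kept in per-vertex hash tables, each scale test is $O(1)$, so the binary search runs in $O(\log I)=O(\log(t\log_n\Delta))$ time. For a path query I then retrieve the stored tree of $C_u^j$ and output the $u$--$v$ tree path through the LCA; this path is contained in $C_u^j$, so its $G$-length is at most $\beta\rho_j$ (matching the stretch) and the reporting cost is linear in the number of edges reported.

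Preprocessing consists of one sparse-cover construction plus one multi-source shortest-path computation per scale (to build the trees). Using the fast variant from \appendixref{sec:fast} each scale costs $O(k|E|)$, summing to $O(kI|E|)=O(t|E|\log_n\Delta)$. The one delicate point in this plan is the query-time guarantee: $P(i)$ is not monotone in $i$ in general (the covers at different scales are independent), but it is guaranteed true throughout $\{i:i\ge i^{*}\}$, which is precisely what is needed to conclude that a standard binary search produces some $j\le i^{*}$ with $P(j)$ true in $O(\log I)$ time.
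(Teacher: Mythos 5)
Your proposal is correct and follows essentially the same approach as the paper: build $(8k n^{1/k}, 2k, n^{i/k})$-sparse covers at geometrically spaced scales with $k=\lceil\sqrt{t}\rceil$, store per-vertex cluster membership and one padding cluster per scale, binary-search for a small scale $j$ with $v\in C_u^j$, and report the tree path inside that cluster's SPT. Your handling of the non-monotone-but-eventually-true predicate is exactly the paper's argument, and the space/stretch/time bookkeeping matches. The one cosmetic difference is that you build an explicit constant-time LCA structure to extract the $u$--$v$ tree path, whereas the paper avoids LCA preprocessing by storing only parent pointers and distance-to-root and having $u$ and $v$ alternately step toward the root, always advancing the endpoint with larger remaining depth, so the two walks necessarily meet at the LCA; this matters slightly because the parent-pointer-only version is what makes the per-vertex storage a genuine label, with nothing global needed beyond the tree itself.
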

{\bf Remark:} Observe that when the diameter $\Delta$ is at most polynomial in $n$, the required space is $O(t)$ words per vertex.
\begin{proof}
Fix a parameter $1\le k$, and let $\Delta=\diam(G)$, $\gamma=8kn^{1/k}$, $q=\lceil \log_{n^{1/k}} \Delta \rceil = \lceil k \log_n \Delta \rceil$. For each $i\in\{0,1,\dots,q\}$ create a $(\gamma,2k,n^{i/k})$-sparse cover ${\cal C}_i$. For each cluster $C\in{\cal C}_i$ choose an arbitrary shortest path tree (SPT) spanning $G[C]$. Every vertex stores a hash table containing the names of the SPTs it is contained in, and for each such tree the vertex only needs to store a pointer to its parent in the tree and the distance to the root of the tree. Since every vertex is contained in at most $2k$ clusters per level, the total space used is $O(k\cdot q)$ per vertex. Observe that if $\Delta=\text{poly}(n)$ then $q=O(k)$.
In addition, for every vertex $u\in V$ and $i\in[q]$, store a pointer to the SPT of a cluster $C_i(u)\in {\cal C}_i$ such that $B(u,n^{i/k})\subseteq C_i(u)$.

Next we describe an algorithm for
answering  a path query between $u,v\in V$.  Let $i\in[q]$ be such that $n^{(i-1)/k}\le d_G(u,v)< n^{i/k}$.
Let $J = \{j \in [0,q] : v \in C_j(u)\}$ be the set of indices $j$ such that $v \in C_j(u)$. By the padding property of the sparse cover, $v \in B(u,n^{j/k}) \subseteq C_j(u)$, for every $j \ge i$. Hence every index $j \ge i$ belongs to $J$. We will conduct a binary search on $[0,q]$ to find an index $j$ such that $j \in J$ and $j-1 \not \in J$. (Alternatively, we will discover that $0 \in J$.) By the above considerations the index $j$ that we will find satisfies $j \le i$.
As $u$ holds a pointer to $C_j(u)$ for every $j\in\{0,1,\dots,q\}$ and $v$ stores the names of clusters containing it in a hash table, deciding if $j \in J$ requires $O(1)$ time.  Next, both $u,v$ follow the path to the root in the SPT created for $C_j(u)$. By taking a step towards the root in the path with the longer remaining distance, we can guarantee that the paths will meet at the least common ancestor of $u,v$.

The query time is bounded by length of the returned path, which is $O(\diam(C_i(u)) = O(\gamma n^{i/k})=O(\gamma n^{1/k} \cdot d_G(u,v))$, so the stretch is $O(\gamma n^{1/k})=O(k n^{2/k})$.
In addition we spend $O(\log q) = O(\log (k \log_n \Delta))$ time for the binary search.
If one is willing to settle for $\gamma=64kn^{1/k}$ (rather than $8k n^{1/k}$), then the pre-processing expected time is $O(qk\cdot |E|)$, using the construction of \sectionref{sec:fast}. 

Finally, note that for the labeling scheme, we can find the appropriate $j\in J$ and return the sum of distances from $u,v$ to the root of the SPT of $C_j(u)$, just by inspecting the labels of $u,v$.

\end{proof}

\section{Small Space Path-Reporting Distance Oracles}
\label{sec:coverTZ}
In this section we show a path-reporting distance oracle with improved stretch, at the price of being applicable only for unweighted graphs. Also we do not know
if it is possible
 to distribute the information among vertices, i.e., to convert this oracle into a labeling scheme.
The distance oracle in this section has both additive and multiplicative stretch. For $\alpha\geq 1$ and $\beta\geq 0$, we say that a distance estimate $\tilde d$ has {\em $(\alpha,\beta)$-stretch} if for all $u,v\in V$, $d_G(u,v)\leq\tilde d(u,v)\leq\alpha \cdot d_G(u,v) + \beta$.

\begin{theorem}\label{thm:oracle}
For any unweighted graph $G=(V,E)$ on $n$ vertices, any integers $k,p,t\ge 1$, there exists a path-reporting distance oracle with $\left(O(t\cdot kn^{1/k}),O(p\cdot kn^{1/k})\right)$-stretch, using $O(kn +tn^{1+1/t}/p)$ space. Furthermore, the query time is proportional to the length of the returned path.
The oracle can be constructed in $O(tmn^{1/t})$ time.
\end{theorem}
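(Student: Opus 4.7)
The plan is to implement the ``pruned TZ + sparse cover'' strategy sketched in the techniques overview. Full Thorup--Zwick (TZ) information is maintained only on a small sampled set $S$ of landmark vertices, while a single strong-diameter sparse cover at radius $p$ from \sectionref{sec:cover} bridges (i) each vertex to its landmark and (ii) consecutive landmarks along any TZ-reported path.

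First I would sample $S\subseteq V$ by including each vertex independently with probability $\Theta(1/p)$ (possibly with a $\log n$ oversample to guarantee the spacing properties below with high probability), so that $|S|=\tilde{O}(n/p)$. Then I would run the standard TZ preprocessing on $V$ with parameter $t$, computing the hierarchy $V=A_0\supseteq\cdots\supseteq A_t=\emptyset$, all pivots, and bunches $B(v)$, but retaining the bunches and the SPT parent pointers used for path reporting only for $v\in S$. This bounds the modified-TZ space by $O(|S|\cdot tn^{1/t})=O(tn^{1+1/t}/p)$ and preserves the preprocessing time $O(tmn^{1/t})$. In parallel I would build a single $(8kn^{1/k},2k,p)$-sparse cover as in \sectionref{sec:cover} and store an SPT per cluster; since each vertex lies in at most $2k$ clusters, this uses $O(kn)$ words. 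For every $v\in V$, I would additionally record a pointer to one padding cluster $C(v)\supseteq B(v,p)$ and to the nearest landmark $c(v)\in S\cap C(v)$, found by a multi-source BFS from $S$ inside each cluster.

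To answer a query on $(u,v)$: if $d_G(u,v)<p$, return the $u$-to-$v$ path inside the SPT of $C(u)$, whose length is at most $8kn^{1/k}\cdot p=O(pkn^{1/k})$ and is absorbed into the additive term. Otherwise, retrieve $c(u),c(v)\in S$, invoke the modified TZ oracle on $(c(u),c(v))$ to obtain a skeleton $c(u)=s_0,s_1,\ldots,s_\ell=c(v)$ consisting only of $S$-vertices in which consecutive vertices are at graph distance $\le p$ by the skeleton-spacing property, replace each $(s_j,s_{j+1})$ by the concrete path of length $O(pkn^{1/k})$ taken from the SPT of the sparse-cover cluster padding $s_j$, and prepend/append the $u$-to-$c(u)$ and $c(v)$-to-$v$ segments from $C(u)$'s and $C(v)$'s SPTs. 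Using TZ's $2t-1$ stretch together with $d_G(c(u),c(v))\le d_G(u,v)+O(p)$, one obtains $\ell\le O(t\cdot d_G(u,v)/p+t)$, so the filler has length $O(tkn^{1/k})\cdot d_G(u,v)+O(tpkn^{1/k})$; the endpoint segments contribute $O(pkn^{1/k})$, and for $d_G(u,v)\ge p$ the $O(tpkn^{1/k})$ summand is absorbed into the multiplicative $O(tkn^{1/k})\cdot d_G(u,v)$ term. All traversals are output-linear, giving the claimed query-time bound.

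The main obstacle will be making rigorous the two properties on which the analysis rests: (a) consecutive $S$-vertices along any TZ-returned path lie within graph distance $p$, and (b) $d_G(v,S)\le O(p)$ for every query endpoint $v$. A uniform sample of size $O(n/p)$ gives both only in expectation, and for sparse graphs (e.g., long paths) there may be choices of $v$ with no $S$-vertex nearby. The natural fix is to sample at rate $\Theta((\log n)/p)$ and apply Chernoff plus a union bound over the $O(n^2)$ pairs, which yields high-probability versions of (a) and (b) at the cost of folding a logarithmic factor into $p$; one then checks that the stated $O(tn^{1+1/t}/p)$ space bound still holds under this rescaling. A secondary (mechanical) subtlety is the data-structure layout for the pruned TZ SPTs: a single post-order sweep in preprocessing installs, at each SPT node, a pointer to its nearest landmark ancestor, and one must check that these compressed pointers mesh with TZ's standard path-reconstruction routine between $c(u)$ and $c(v)$, so that the skeleton is emitted in time linear in its size.
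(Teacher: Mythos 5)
Your high-level blueprint matches the paper's: run full TZ preprocessing on $V$, retain bunches only for a sparse landmark set, prune the TZ trees, and use a sparse cover from \sectionref{sec:cover} to stitch landmark-to-landmark and endpoint-to-landmark gaps. However, two specific ingredients of the paper's proof are missing and cannot be patched by what you propose.

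First, you never establish a \emph{lower} bound on consecutive skeleton distances. In your query routine each $(s_j,s_{j+1})$ gap is filled with a cover path of length $O(pkn^{1/k})$; for this to contribute stretch $O(kn^{1/k})$ per gap (and for the number of gaps $\ell$ to be $O(t\,d_G(u,v)/p)$ rather than up to the full TZ path length), you need $d_G(s_j,s_{j+1})\ge p$. The paper achieves this by a greedy post-processing of the skeleton: after ensuring all consecutive distances are $\le p$ (see next point), it drops vertices closer than $p$ to the last retained one, yielding consecutive distances in $[p,3p]$ (and correspondingly uses a sparse cover at scale $3p$, not $p$). Without this step your bound $\ell\le O(t\,d_G(u,v)/p + t)$ is unjustified, and in the worst case (all sampled vertices clustered) the filler blows up by a factor of $p$.

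Second, the mechanism guaranteeing the \emph{upper} bound $d_G(s_j,s_{j+1})\le p$ in the paper is not a density property of a random sample. The paper keeps, in each pruned tree $\bar T_w$, the vertices of a \emph{deterministic} tree-separator set $R_w$ (of size $\le 2|T_w|/p$, from \lemmaref{lem:separator}) in addition to the deterministic $2p$-ball hitting set $N$ (from \lemmaref{lem:hit}, of size $\le 2n/p$). Since removing $R_w$ breaks $T_w$ into components of at most $p$ vertices, no tree path of length $p$ can avoid $R_w$; this gives the $\le p$ spacing with no probabilistic argument and crucially with $\sum_w|R_w|=O(tn^{1+1/t}/p)$, so the space bound is met exactly. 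Your substitute — oversampling $S$ at rate $\Theta((\log n)/p)$ and union-bounding over all relevant path segments — inflates $|S|$ to $\Theta(n\log n/p)$, so the bunches alone cost $O(tn^{1+1/t}\log n/p)$; your remark that ``the stated $O(tn^{1+1/t}/p)$ space bound still holds under this rescaling'' is incorrect. Both $N$ and $R_w$ can be computed greedily in the stated preprocessing time, so there is no need for randomness here; replacing the random sample with those two deterministic sets and adding the greedy skeleton pruning would close the gaps.
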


\paragraph{Proof Overview:} Fix a parameter $p$.  and we partition the distances to those smaller than $p$ and those larger.
In order to be space efficient, we "prune out" most of the vertices in the distance oracle of Thorup-Zwick. We will choose a subset $N\subseteq V$, of size $n/p$, that touches the (approximately) $p$-neighborhood of any vertex of $V$. The TZ-oracle will be responsible for the large distances between any two vertices in $N$: it should be able to report a sufficiently dense "skeleton" of an approximate shortest path.  All consecutive distances on the path are roughly $p$. We show that one can significantly reduce the size of each of the TZ trees, while still maintaining this usability. We augment our data structure with a sparse cover that will handle all the small distances: specifically we need to "fill in" the paths between consecutive vertices in the skeleton, and the paths between each vertex to its representative in $N$.

\subsection{Construction}
\label{sec:construction}

We shall use the following Lemmata. The first one is folklore.

\begin{lemma}\label{lem:hit}
For every unweighted graph on $n$ vertices and parameter $r$, there is a set of at most $2n/r$ vertices that intersects every ball of radius $r$.
\end{lemma}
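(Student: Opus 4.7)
The plan is to prove this via a DFS/Euler-tour argument on a spanning tree. First, I would reduce to the case where $G$ is connected, since one can apply the construction independently to each connected component and sum the resulting size bounds. For a connected $G$ on $n$ vertices, compute an arbitrary spanning tree $T$ and perform an Euler tour of $T$ (traversing each tree edge twice). This yields a sequence of vertices $v_1,v_2,\ldots,v_L$ with $L\le 2n-1$ such that consecutive vertices $v_i,v_{i+1}$ are adjacent in $T$ (and hence in $G$), and every vertex of $V$ appears as some $v_k$.

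Next, I would construct the hitting set $N$ by selecting roughly every $r$-th vertex along the tour, for instance $N=\{v_{jr} : 1\le j\le \lfloor L/r\rfloor\}$. The size bound $|N|\le L/r\le 2n/r$ is then immediate. For the covering property, any vertex $v$ appears as $v_k$ for some position $k$, and the nearest selected tour-position is within at most $r$ tour-steps of $k$. Since each tour-step traverses a tree edge of $G$, the corresponding picked vertex is at graph-distance at most $r$ from $v$, so $N\cap B(v,r)\neq\emptyset$, as required.

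The main obstacle I anticipate is handling small or exotic boundary cases. In particular, if some connected component has fewer than roughly $r/2$ vertices, then $\lfloor L/r\rfloor$ might be $0$, yet one vertex per component is still needed to hit the (trivial) radius-$r$ balls inside. I would handle this by explicitly picking one anchor vertex per such small component, observing that for any component of diameter at most $r$ a single vertex already hits every radius-$r$ ball, and then verifying that the total count telescopes into the global bound $2n/r$ (which, for the relevant regime $r\le 2n$, comfortably absorbs one anchor per small component). An alternative would be to use a slight variant of the selection rule so that at least one $v_i$ is always picked per tour, thereby sidestepping the issue entirely.
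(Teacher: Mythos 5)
Your Euler-tour construction is correct and is, in fact, the standard ``folklore'' proof of this fact (the paper itself gives no proof, just cites folklore). For a connected graph, the Euler tour of a spanning tree has at most $2n-1$ positions, consecutive positions are adjacent in $G$, and every vertex appears; taking every $r$-th position yields a set of size at most $\lfloor(2n-1)/r\rfloor \le 2n/r$, and every position of the tour is within $r-1$ tour-steps, hence within graph distance $r-1<r$, of a selected one. That part of your argument is sound, and the $2n$ numerator in the bound is exactly the length of the Euler tour, which is a strong hint that this is the intended proof.

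The problem is the disconnected case, and your proposed patch does not work. The lemma \emph{as literally stated} is false for disconnected graphs: take $n$ isolated vertices and $r=3$. Then $B(v,3)=\{v\}$ for each $v$, so the hitting set must be all of $V$, of size $n>2n/3$. There is no telescoping. Whenever a component has $n_j<r/2$ vertices, its budget $2n_j/r$ is strictly less than one, while you must still pick at least one vertex there; with many such components the deficits accumulate and cannot be ``comfortably absorbed'' by anything. The correct resolution is not bookkeeping but to observe that the paper implicitly assumes $G$ is connected (the set $N$ is used to seed a Thorup--Zwick oracle on a single graph, and distance oracles on disconnected graphs are degenerate anyway), to state that assumption, and then to run your Euler-tour argument on the whole graph. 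With the convention of always picking at least one tour vertex (say the root), the bound $1\le 2n/r$ holds for all $r\le 2n$; for $r>2n$ the stated budget $2n/r$ drops below one, and the lemma is again vacuous, but that regime never arises in the paper's application.
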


The next lemma can be found, e.g., in \cite{NS07}, Lemma 12.1.5.

\begin{lemma}\label{lem:separator}
For every tree $T$ on $n$ vertices and parameter $r$, there is a set of at most $2n/r$ vertices whose removal separates $T$ into components of size at most $r$ each.
\end{lemma}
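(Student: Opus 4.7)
The plan is to give a greedy bottom-up procedure that selects separator vertices as we climb the tree, charging each selection to at least $r$ distinct vertices. Root $T$ arbitrarily and process its vertices in postorder. At each vertex $v$ I will maintain a counter $c(v)$ intended to measure the size of the component containing $v$ in the partially processed tree minus the separators chosen so far. Formally, for a leaf set $c(v) := 1$; for an internal vertex $v$ with children $u_1,\dots,u_d$ set
\[
c(v) := 1 + \sum_{i=1}^{d} c(u_i),
\]
where we adopt the convention $c(u_i) = 0$ whenever $u_i$ was already placed into the separator set $S$. After computing $c(v)$, if $c(v) \ge r$ then add $v$ to $S$ and reset $c(v) := 0$; otherwise leave $c(v)$ and $S$ unchanged.

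The bound $|S| \le 2n/r$ follows from a charging argument. Each time a vertex $v$ is added to $S$, its value $c(v)$ was at least $r$, and this quantity counts exactly $v$ together with those descendants of $v$ that are not separated from $v$ by some already-chosen separator. Reset to $0$ immediately after insertion guarantees that no descendant of $v$ counted in $c(v)$ will ever be counted again by any ancestor of $v$ (because the chain of $c$-updates going up the tree stops at $v$). Thus the multiset of vertices charged across all additions to $S$ is a subset of $V$, and each addition is charged $\ge r$ vertices, yielding $|S| \le n/r \le 2n/r$.

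For the component-size bound, I will use a ``topmost-vertex'' argument. Let $K$ be any connected component of $T \setminus S$ and let $v^\ast$ be the vertex of $K$ closest to the root of $T$. Since $v^\ast \notin S$, consider the moment in the postorder when $c(v^\ast)$ was computed and compared against $r$. By induction on the postorder, $c(v^\ast)$ equals exactly the number of vertices in the subtree $T_{v^\ast}$ that are connected to $v^\ast$ in $T_{v^\ast} \setminus S$; but those vertices are precisely the members of $K$ (because $v^\ast$ is the topmost vertex of $K$, no vertex of $K$ lies outside $T_{v^\ast}$). Since $v^\ast$ was not inserted into $S$, we must have had $c(v^\ast) < r$, hence $|K| \le r-1 \le r$.

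The only genuine subtlety is verifying the inductive claim that $c(v^\ast)$ coincides with $|K|$; the rest is routine. This identity hinges on two features of the reset rule: (i) if a descendant $u$ of $v^\ast$ is in $S$, then $u \notin K$ and $u$ contributes $0$ to the sums propagating up to $v^\ast$, which matches its exclusion from $K$; and (ii) if $u \notin S$ is a descendant of $v^\ast$, then the path from $u$ to $v^\ast$ lies entirely in $V \setminus S$ iff $u$ is counted (through a chain of non-reset additions) in $c(v^\ast)$. Once this bookkeeping is written out carefully, both conclusions of the lemma follow directly.
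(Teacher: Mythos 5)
The paper does not prove this lemma but cites it from [NS07]; your bottom-up postorder greedy is the standard proof of such statements and it is correct. The key points are all handled: the reset rule makes the vertex sets charged to distinct separator choices pairwise disjoint, so your accounting actually yields the stronger bound $|S|\le n/r$ (the lemma only asks for $2n/r$); and the component-size argument via the topmost vertex $v^\ast$ of a component $K$ is sound, because postorder guarantees that every separator lying inside $T_{v^\ast}$ was already recorded when $c(v^\ast)$ was evaluated, while the fact that $v^\ast$ is topmost in $K$ ensures $K\subseteq T_{v^\ast}$; hence $c(v^\ast)=|K|$, and $v^\ast\notin S$ forces $|K|<r$. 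The inductive identity you flag as the ``only genuine subtlety'' is exactly as you describe and goes through: for any $w\in T_{u_i}$ the path to $v$ consists of the path to $u_i$ followed by the single edge $u_i v$, so with $u_i\notin S$ the blocking separators relevant for $w$ are precisely those in $T_{u_i}$ counted by $c(u_i)$, and with $u_i\in S$ the reset correctly zeroes out the entire subtree's contribution.
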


One of the building blocks of our oracle is a variation of the Thorup-Zwick oracle \cite{TZ01}.
We briefly recall the TZ construction with stretch parameter $t$: Define $A_0=V$ and for each $i\ge 1$ sample $A_i$ from $A_{i-1}$ by including every element of $A_{i-1}$ independently with probability $n^{-1/t}$. Finally, set $A_t=\emptyset$. For $u\in V$ define the {\em bunch} of $u$ as $B(u)=\{w\in A_{i-1}\mid d_G(u,w)<d_G(u,A_i),i\ge 1\}$.  For each $w\in V$, if $i$ is such that $w\in A_{i-1}\setminus A_i$,  define $C(w)=\{u\in V\mid d_G(w,u)<d_G(u,A_i)\}$. (Note that the cluster $C(w)$ contains all vertices $u$ such that $w\in B(u)$).
It can be shown that
for every $u \in C(w)$, all vertices on the shortest path between $u$ and $w$ also belong to $C(w)$ as well. As a result, an SPT for $C(w)$ is a subtree of an SPT rooted at $w$ for the entire graph $G$.
 During the preprocessing  such a tree spanning $C(w)$ is created for each $w\in V$. We denote it by $T_w$.

In the original data structure, each vertex $u\in V$ stored the vertices in $B(u)$ and their distances from $u$ in $G$. For each $i=0,\dots,t-1$, it also stored the special vertex $p_i(u)\in A_i$, which is the closest vertex to $u$ in $A_i$. The query algorithm on $u,v$ uses only the information stored by the query vertices to produce some $w\in B(u)\cap B(v)$ such that $d_G(u,v)\le d_G(u,w)+d_G(w,v)\le (2k-1)d_G(u,v)$,
and the actual path could be obtained from $T_w$. It is also shown in \cite{TZ01} that for each $v\in V$, the expected size of $B(v)$ is $O(kn^{1/k})$.

As we aim to save space, we will only store the bunches $B(u)$ for a few vertices. Fix a parameter $p$, and let $N$ be a set of size $n/p$ that hits every ball of radius $2p$. (See  \lemmaref{lem:hit},  $r=2p$.) Only the vertices $v\in N$ will store the bunches $B(v)$ and special vertices $p_i(v)$ of the TZ-oracle.
Since the total size of the trees $\{T_w\}_{w\in V}$ is equal to the total size of the bunches, we also need to prune these trees.
For each $w\in V$, let $R_w$ be the set given by \lemmaref{lem:separator} applied on the tree $T_w$ with $r=p$, of size at most $2|T_w|/p$, and let $R=\cup_{w\in V}R_w$. Let $\bar{T}_w$ be the pruned tree that contains only the vertices of $T_w$ that are in $R_w\cup N\cup\{w\}$. Specifically, each vertex in the pruned tree $\bar{T}_w$ will store a pointer to its nearest ancestor which is also in $\bar{T}_w$, the distance to it, and the distance to the root.

We shall also require a sparse cover (as constructed in \sectionref{sec:cover}). For a parameter $k\ge 1$, let ${\cal C}$ be a $(8kn^{1/k},2k,3p)$-sparse cover, and for each cluster $D\in{\cal C}$ create an SPT spanning $G[D]$. As before, in each tree a vertex stores a pointer to its parent and the distance to the root. Additionally, every vertex $u\in V$ stores a hash table of trees containing it, a pointer to $D(u)$, a cluster in which it is padded, and a pointer to some $u'\in N$ such that $d_G(u,u')\le 2p$. Note that $u'\in D(u)$.

{\bf Remark:} Observe that we build the data structure on all of $V$ and then prune the obtained TZ trees, rather than applying the TZ structure restricted to the vertices of $N$ (which would seem an obvious simplification). This is because the TZ trees that will be produced from the metric induced on $N$ may have arbitrarily large weights on the edges.
One then would need a different mechanism for replacing these edges by paths of the original graph. This is because the cover ${\cal  C}$ can only help filling in gaps of length up to $3 p$.

\paragraph{Bounding the Size of the Oracle:} Next we show that the space used by our oracle is $O(kn+tn^{1+1/t}/p)$. To see this, note that since
the cover overlap is $2k$,
to store the SPTs of the cover and the relevant pointers for each $u\in V$ requires only $O(kn)$ space.  Next, we bound the size of the stored bunches. The number of vertices in $N$ is at most $n/p$, and since the expected bunch size for each vertex is $O(tn^{1/t})$, the total (expected) size of the bunches we store is $O(tn^{1+1/t}/p)$. It remains to bound the size of the pruned trees. Since every vertex $u\in N$ is expected to appear in $O(tn^{1/t})$ trees (the number of trees equals its bunch size), the contribution of vertices in $N$ to the size of the trees $\{\bar{T}_w\}_{w\in V}$ is again $O(tn^{1+1/t}/p)$. Finally, recall that the (expected) size of all the trees $\{T_w\}_{w\in V}$ is $O(tn^{1+1/t})$.  \lemmaref{lem:separator} implies that in each tree only fraction of $2/p$ of the vertices are in $R$ (rounded up), thus the contribution of vertices in $R$ to the pruned trees is $O(n+tn^{1+1/t}/p)$. The roots of the pruned trees contribute only $O(n)$ to the size.

\paragraph{Construction Time:} The bottleneck in our construction time is to find the clusters $C(w)$. With the construction of Thorup and Zwick, we get a bound of
$O(tmn^{1/t})$.

\subsection{Answering Path Queries}

In order to answer a path query on $u,v\in V$, we first check if $v\in D(u)$. If so, taking the paths to the root in the SPT created from $D(u)$ from both $u$ and $v$, as done in \sectionref{sec:simple}, will give a path of length $O(\diam(D(u))=O(kn^{1/k}\cdot p)$, which induces such additive stretch.

Note that if $d_G(u,v)\le 2p$ it must be that $v\in D(u)$, so the complementary  case is when $d_G(u,v)> 2p$. We shall use the pruned TZ data structure in the following way. First use the pointers stored at $u,v$ to find $u'\in D(u)\cap N$ and $v'\in D(v)\cap N$. Using the TZ algorithm, which only requires the information stored by $u'$ and $v'$, we find $w\in B(u')\cap B(v')$ with $d_G(u',w)+d_G(v',w)\le (2t-1)\cdot d_G(u',v')$. Since $w$ is contained in both bunches, and $u',v'\in N$, we get that $u',v'\in\bar{T}_w$. Since $\bar{T}_w$ is also a shortest path tree from $w$,
\[
d_{\bar{T}_w}(u',v') \le d_G(u',w)+d_G(v',w)\le(2t-1)d_G(u',v')~.
\]

The "skeleton path" $u'=u_0,u_1,\dots u_{l'}=v'$ induced by the (pruned) tree $\bar{T}_w$ from $u'$ to $v'$ has stretch $2t-1$. It can be obtained efficiently by following paths towards the root $w$ from $u',v'$, as done above. Our goal now is to show that there is a subpath, in which all consecutive distances are in the range $[p,3p]$, these "gaps" will be covered by the sparse cover. Since removing $R_w$ partitions $T_w$ into subtrees of size at most $p$, it cannot be the case that there is a path in $T_w$ of length $p$ that does not intersect $R_w$ (such a path induced a subtree with $p+1$ vertices). We conclude that for each $j\in[l']$, $d_G(u_{j-1},u_j)\le p$.
We further prune this skeleton path, to get a sub-path in which all consecutive distances are in the range $[p,3p]$. This can be achieved by greedily deleting excessive points (those closer than $p$ to the last point we kept) while traversing the path, and making sure to keep both $u',v'$. It is not hard to verify that the maximum distance between consecutive points will be at most $3p$.
Let $u'=v_0,v_1,\dots,v_l=v'$ be the resulting skeleton path.

For each $j\in[l]$ find a path in $G$ from $v_{j-1}$ to $v_j$ using the sparse cover. Since $d_G(v_{j-1},v_j)\le 3p$ we get that $v_j\in D(v_{j-1})$. So we can obtain a path in $G$ from $v_{j-1}$ to $v_j$ of length at most $O(p\cdot kn^{1/k})$, in the same manner we handled the base case above (where $v\in D(u)$). Note that this induces a $O(kn^{1/k})$ stretch for each $j$ (because $d_G(v_{j-1},v_j)\ge p$), so the final multiplicative stretch is $O(t\cdot kn^{1/k})$. In a similar manner, since both $d_G(u,u'),d_G(v,v')\le 2p$, we obtain from the sparse cover paths from $u$ to $u'$ and from $v'$ to $v$ of distance at most $O(p\cdot kn^{1/k})$. The latter contributes to the additive stretch.

The running time of the query is proportional to the length of the path returned, since after finding the tree $\bar{T}_w$, we just follow pointers to the roots in both the pruned TZ-trees and in the SPT of the cover, in constant time per step. Note that the $O(t)$ time to find $T_w$ is dominated by the stretch factor which we can assume
is bounded by the length of the path.
This proves \theoremref{thm:oracle}.

\subsection{Improved Multiplicative Stretch using Several Covers}


Choosing $t=k$ and $p=n^{1/k}$ in the parameters of \theoremref{thm:oracle} yields stretch of $(O(k^2n^{1/k}), O(k n^{2/k}))$.
In terms of purely multiplicative stretch  it is $O(k n^{1/k} \cdot (k + n^{1/k}))$.
 Next we show how to improve one of the factor of $n^{1/k}$ at the cost of increased space. Instead of a single cover, we use a collection of $s$ sparse covers, and obtain the following theorem.
\begin{theorem}\label{thm:multi}
For any unweighted $n$-vertex graph $G = (V,E)$, any positive integer parameter $k$, and any parameter $\epsilon > 0$, there exists a path-reporting distance oracle with space $O(kn/\epsilon)$ and stretch $O(kn^{1/k}\cdot(k+n^{\epsilon/k}))$. Furthermore, the query time is proportional to the length of the returned path.
\end{theorem}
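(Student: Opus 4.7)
The plan is to augment the distance oracle of \theoremref{thm:oracle} (with parameters $t=k$ and $p=n^{1/k}$, giving $O(kn)$ space for the TZ portion) with a \emph{hierarchy} of $s=O(1/\epsilon)$ sparse covers at geometrically spaced scales, in place of the single cover at scale $3p$. Concretely, for $i=0,1,\dots,s$ set $\rho_i=n^{i\epsilon/k}$ (so $\rho_0=1$ and $\rho_s\ge p$), and construct a $(8kn^{1/k},2k,3\rho_i)$-sparse cover $\mathcal{C}_i$ as in \sectionref{sec:cover}, together with an SPT for every cluster. Each vertex $u$ stores, for every $i$, a pointer to a cluster $D_i(u)\in\mathcal{C}_i$ padding $u$, plus a hash table listing the clusters that contain it across all scales. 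Since every $\mathcal{C}_i$ has overlap $2k$ and so contributes $O(kn)$ storage, the cover hierarchy uses $O(kn/\epsilon)$ space in total, matching the budget of the theorem.

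On a query $(u,v)$ I first scan the hierarchy (with $O(1)$ per check via the hash tables) and let $i^*$ be the smallest index with $v\in D_{i^*}(u)$, if any. When $i^*$ exists -- in particular whenever $d_G(u,v)\le 3p$, by padding at scale $s$ -- return the $u$-$v$ path inside the SPT of $D_{i^*}(u)$, of length $O(\rho_{i^*}\cdot kn^{1/k})$. Otherwise $d_G(u,v)>3p$, and I fall back to the TZ procedure of \theoremref{thm:oracle}: locate $u',v'\in N$, reconstruct the skeleton $u'=v_0,v_1,\dots,v_l=v'$ with consecutive spacing in $[p,3p]$ through the pruned tree $\bar T_w$, fill each skeleton gap using the coarsest cover $\mathcal{C}_s$, and fill the two endpoint segments $u\to u'$ and $v'\to v$ by the same smallest-$i$ multi-scale lookup.

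For the stretch analysis in the direct-cluster case, the minimality of $i^*$ together with the contrapositive of the padding property give $d_G(u,v)>3\rho_{i^*-1}$, whence $\rho_{i^*}=\rho_{i^*-1}\cdot n^{\epsilon/k}\le d_G(u,v)\cdot n^{\epsilon/k}/3$, so the returned path has length $O(kn^{(1+\epsilon)/k}\cdot d_G(u,v))$ (the boundary case $i^*=0$ yields $O(kn^{1/k})$, again within the target). In the fallback case the skeleton part is handled exactly as in \theoremref{thm:oracle} and contributes $O(k^2n^{1/k})\cdot d_G(u,v)$ (TZ stretch $O(k)$ times the $O(kn^{1/k})$ per-gap factor from $\mathcal{C}_s$), while each endpoint segment contributes $O(kn^{(1+\epsilon)/k})\cdot d_G(u,u')$ (respectively $d_G(v,v')$) by the same multi-scale argument as in the direct case.

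The hard part, and what needs a careful argument, will be to turn the endpoint cost from an additive into a multiplicative one. In the fallback regime this follows because $d_G(u,u'),d_G(v,v')\le 2p$ while $d_G(u,v)>3p$, so $d_G(u,u')+d_G(v,v')\le 4p<2d_G(u,v)$, meaning the two endpoints together add only $O(kn^{(1+\epsilon)/k})\cdot d_G(u,v)$ to $\tilde d(u,v)$. Without this observation one would still be stuck with an additive stretch of order $O(kn^{(2+\epsilon)/k})$; exploiting it is precisely what lets the multi-scale refinement convert the $O(pkn^{1/k})=O(kn^{2/k})$ additive term of \theoremref{thm:oracle} into a multiplicative $O(kn^{(1+\epsilon)/k})$ factor at the modest $1/\epsilon$ blow-up in space, without touching the TZ portion of the construction. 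Summing the skeleton and endpoint contributions then yields $\tilde d(u,v)\le O(kn^{1/k}(k+n^{\epsilon/k}))\cdot d_G(u,v)$, as required.
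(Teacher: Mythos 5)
Your approach mirrors the paper's exactly: a hierarchy of $O(1/\epsilon)$ sparse covers at geometrically spaced scales, selecting per query the smallest scale that pads, so that the base case and the two endpoint segments $u\to u'$, $v'\to v$ incur only a multiplicative $O(kn^{(1+\epsilon)/k})$ overhead. Your spelled-out observation that $d_G(u,u')+d_G(v,v')\le 4p < 2d_G(u,v)$ is exactly the right way to convert the endpoint cost into a multiplicative term; the paper leaves this implicit. So far so good.

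There is, however, a quantitative slip in the skeleton step that makes the claimed bound fail as written. You set $\rho_i = n^{i\epsilon/k}$ and fill skeleton gaps with $\mathcal{C}_s$ at scale $3\rho_s$, but with $s=\lceil 1/\epsilon\rceil$ you only have $\rho_s\ge p$, and in general $\rho_s$ can be as large as $n^{(1+\epsilon)/k}$. A cluster of $\mathcal{C}_s$ then has diameter $O(kn^{1/k}\cdot\rho_s)$, while a skeleton gap satisfies only $d_G(v_{j-1},v_j)\ge p$, so the per-gap factor is $O(kn^{1/k}\cdot\rho_s/p)$, which can be $O(kn^{(1+\epsilon)/k})$ rather than the $O(kn^{1/k})$ you assert. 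With $t=k$ this yields skeleton stretch $O(k^2 n^{(1+\epsilon)/k})$, worse than the target $O(kn^{1/k}(k+n^{\epsilon/k}))$ by a factor of $\min(k, n^{\epsilon/k})$, which can be $\Theta(\log n/\log\log n)$. The paper avoids this by choosing the scales $(3p)^{i/s}$ for $i\in[s]$ (equivalently $\rho_i=p^{i/s}$), so the coarsest scale is \emph{exactly} $3p$: gap-filling then costs $O(kn^{1/k})$ per gap, while consecutive scales differ by $p^{1/s}\le n^{\epsilon/k}$, which preserves the $O(kn^{(1+\epsilon)/k})$ multi-scale bound for the endpoints and the base case. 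Replacing your scale sequence with the paper's (or capping $\rho_s$ at $p$) repairs the proof.
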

\begin{proof}
For each $i\in[s]$ let ${\cal C}_i$ be a $(8kn^{1/k},2k,(3p)^{i/s})$-sparse cover, and for each cover store the same information per vertex as above. For $i\in[s]$, denote by $D_i(u)$ the cluster in ${\cal C}_i$ in which $u$ is padded. Recall that the (additive) factor of $O(p\cdot kn^{1/k})$ in the stretch was inflicted in the base case when $d_G(u,v)\le 2p$, and also from completing the path from $u$ to $u'$ and from $v$ to $v'$.

Given some $u,v\in V$ with the guarantee that $d_G(u,v)\le 3p$, we can find an index $i$ such that $v\in D_i(u)$ and $v\notin D_{i-1}(u)$ (or that $v\in D_1(u)$) by binary search. The path between $u,v$ in the SPT induced from $D_i(u)$ is of length at most $O(p^{i/s}\cdot kn^{1/k})$, and can be found
in the same way as was described above.
Since $v\notin D_{i-1}(u)$, $d_G(u,v)\ge (3p)^{(i-1)/s}$, and thus the stretch factor is only $O(p^{1/s}\cdot kn^{1/k})$.
Combining this with the stretch factor of $O(t\cdot kn^{1/k})$ on the path from $u'$ to $v'$, we get total stretch $O((t+p^{1/s})\cdot kn^{1/k})$. Note that we only use the collection of $s$ covers twice per query. Specifically, all the skeleton missing paths will be filled in using the cover ${\cal C}_s$ as before.
(The cover ${\cal C}_s$  has exactly the same parameters as the cover ${\cal C}$ from \sectionref{sec:construction}.)
So the additive $O(\log s)$ term for the query time is surely dominated by the stretch.
Choosing $t=k$, $p=n^{1/k}$ and $s  = \lceil 1/\epsilon \rceil$ completes the proof.
\end{proof}

\section{Conclusions}
We gave space-efficient approximate distance oracles, distance labeling, and labeled routing for undirected graphs. Our distance oracles break the $n\log n$ space bound
of Thorup and Zwick for unweighted graphs and can report approximate shortest paths in time proportional to their length. The cost is an increase in
(multiplicative and/or additive) stretch. For distance labeling and routing, we break the previously best known space bound of order $\log n$ words at the cost of larger
stretch.

It might be possible to improve preprocessing of our distance oracles, e.g., by using techniques from~\cite{W12} for graphs that are not too sparse. Note that
the oracle of Mendel and Naor achieves linear space and logarithmic stretch but it can only report approximate distances, not paths. We state it as an open problem
whether a path-reporting oracle with linear space and polylogarithmic stretch exists which reports a path in time proportional to its length.

\bibliographystyle{alpha}
\bibliography{oracle}

\appendix

\section{Fast Construction of Sparse Covers}\label{sec:fast}
For a weighted graph $G=(V,E)$, we show a probabilistic construction of $(64k\cdot n^{1/k},2k,\rho)$-sparse cover in time $O(k\cdot|E|)$,
for any $\rho>0$. The main building block are padded partitions. A {\em partition} $P=\{C_1,\dots,C_t\}$ of the graph $G$ is a collection of pairwise disjoint clusters whose union covers $V$. We say that the partition
is strong diameter $\Delta$-bounded if $\diam(G[C_i])\le\Delta$.
For a partition $P$ and a vertex $x$, let $P(x)$ denote the cluster of $P$ that contains $x$.
We use the following Lemma that appears (implicitly) in \cite{B96}.
\begin{lemma}\label{lem:padd}
For any weighted graph on $n$ vertices there exists a distribution ${\cal P}$ over strong diameter $\Delta$-bounded partitions, such that for all $v\in V$ and $0\le\beta\le 1/8$,
\[
\Pr_{P\sim{\cal P}}[B(x,\beta\Delta)\subseteq P(x)]\ge n^{-16\beta}~.
\]
Furthermore, one can sample from this distribution in linear time.
\end{lemma}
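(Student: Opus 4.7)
The plan is to construct the distribution $\mathcal{P}$ via a randomized ball-carving procedure in the spirit of Bartal's original construction~\cite{B96}. I would fix a rate parameter $\lambda = \Theta(\ln n / \Delta)$ (to be tuned for the constant $16$) and sample a single radius $R$ from an exponential distribution with rate $\lambda$, truncated to $[0, \Delta/2]$, together with a uniformly random permutation $\pi$ of $V$. The partition is built by scanning vertices in the order of $\pi$: for each $v$ that is still unclaimed, form the cluster $C_v$ as the set of vertices reachable from $v$ by a path of length at most $R$ in the subgraph induced on the currently unclaimed vertices, and then remove $C_v$ from the pool.

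The strong diameter bound is immediate: each vertex of $C_v$ is connected to $v$ by a path inside $G[C_v]$ of length at most $R \le \Delta/2$, so $\diam(G[C_v]) \le 2R \le \Delta$. For the padding guarantee, fix a vertex $x$ and set $r = \beta\Delta$. Sort the other vertices by distance to $x$ as $v_{(1)} = x, v_{(2)}, \ldots$ with distances $d_1 = 0 \le d_2 \le \cdots$. The ball $B(x, r)$ is cut by the partition precisely when there is some index $j$ such that $v_{(j)}$ is the $\pi$-minimum of $\{v_{(1)}, \ldots, v_{(j)}\}$ and $R \in [d_j - r, d_j + r)$: in that case the cluster centered at $v_{(j)}$ reaches into $B(x,r)$ but fails to cover it, whereas for any $j$ with $R \ge d_j + r$ the entire ball is absorbed, and for $j$ with $R < d_j - r$ the cluster misses $B(x,r)$ altogether. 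Since $\pi$ is independent of $R$, the event ``$v_{(j)}$ is the $\pi$-minimum of the first $j$ vertices'' has probability $1/j$, and conditioning on $R \ge d_j - r$, the memoryless property of the exponential yields $\Pr[R < d_j + r \mid R \ge d_j - r] \le 1 - e^{-2\lambda r}$. Combining these and telescoping a product over $j$ gives a lower bound of the form $\prod_{j=1}^{n} (1 - \tfrac{c\lambda r}{j}) \ge n^{-O(\lambda r)}$, which is $n^{-16\beta}$ after choosing $\lambda = 16 \ln n / \Delta$ (and using $\beta \le 1/8$ to justify that the truncation of $R$ at $\Delta/2$ is benign, since the relevant window $[d_j - r, d_j + r)$ always lies well inside $[0, \Delta/2]$ for the cut events that matter).

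The main obstacle I expect is pinning down the constants so that the exponent is exactly $16\beta$ rather than a generic $O(\beta)$: this requires a careful comparison between the telescoping product and the exponential bound, and verifying that the truncation of $R$ at $\Delta/2$ inflates the padding failure by only a constant factor (absorbed into the constant $16$). The linear-time sampling claim follows readily: sampling $R$ and $\pi$ is $O(n)$, and the ball-growing phase can be implemented as a single multi-source BFS/Dijkstra pass that visits each edge $O(1)$ times, giving $O(|E|)$ total work.
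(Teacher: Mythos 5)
Your construction differs from the paper's in a way that matters. The paper (following Bartal) grows the clusters sequentially, drawing a \emph{fresh, independent} truncated-exponential radius $r_i$ for each cluster center, and processes centers in an arbitrary order; you instead draw a \emph{single} radius $R$ and a random permutation $\pi$ (the CKR/FRT style). The paper's analysis is then short: the first cluster whose ball reaches $B(x,\beta\Delta)$ has a fresh radius, and by memorylessness the conditional probability that this radius is large enough to swallow the ball is already $\approx n^{-O(\beta)}$; a separate union bound over the low-probability ``far cut'' event finishes the argument. No random permutation and no product over vertices appear anywhere.

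The gap in your version is the product step. You argue that $\Pr[A_j]\le\frac1j(1-e^{-2\lambda r})$ where $A_j$ is the event ``$v_{(j)}$ is $\pi$-minimal among the first $j$ and $R\in[d_j-r,d_j+r)$,'' and then assert that $\Pr\bigl[\bigcap_j\neg A_j\bigr]\ge\prod_j\bigl(1-\Pr[A_j]\bigr)$. That inequality is false in general: all the $A_j$ share the same $R$ (and the same $\pi$), so they are strongly correlated, and one cannot multiply the complements as if they were independent. (With Bartal's independent per-cluster radii there \emph{is} genuine fresh randomness at each cluster, which is exactly what makes the paper's one-shot memorylessness argument go through.) If you want to keep the single-$R$-plus-permutation construction, the valid route is to condition on $R$, observe that $\Pr_\pi[\text{padded}\mid R]=N_1(R)/N_2(R)$ where $N_1(R)=|\{v:d(v,x)\le R-r\}|$, $N_2(R)=|\{v:d(v,x)\le R+r\}|$ (this ratio \emph{is} a telescoping product $\prod(1-1/j)$ over exactly the $j$'s whose $d_j$ fall in the window), and then lower bound $\E_R[N_1(R)/N_2(R)]$ --- but this expectation is not the product of per-$j$ bounds and requires a separate argument (Jensen, or explicit integration against the exponential density). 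Also, the ``precisely when'' in your cut characterization should be ``only when'' (the sufficiency direction fails because clusters grow in the residual subgraph, not the full metric); this is harmless for a padding lower bound, but worth stating correctly. Finally, your linear-time claim invokes a multi-source Dijkstra pass, which for weighted graphs is not linear; the paper's sequential ball-carving charges each edge once, but a priority queue still costs a logarithmic factor, so this assertion needs a qualifier too.
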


In order to construct a cover, just sample a partition according to the distribution of \lemmaref{lem:padd} for $2k$ times, with parameters $\Delta =64k\cdot n^{1/k}\cdot\rho$ and $\beta=\rho/\Delta$, and return the collection of clusters obtained.
The radius bound on each cluster is $\Delta$, and since each partition consists of disjoint clusters, each point will be covered exactly $2k$ times. The probability that a certain ball of radius $\rho=\beta\Delta$ is not contained in any of the $2k$ partitions is at most
\[
\left(1-n^{-16\beta}\right)^{2k}=\left(1-e^{-16\ln n/(64kn^{1/k})}\right)^{2k}\le\left(\frac{\log n}{2kn^{1/k}}\right)^{2k}<1/n^{3/2}~,
\]
which holds since $(\log n/(2k))^{2k}\le n^{1/2}$.
Using a union bound over all $n$ balls, there is high probability that each of them will be contained in some cluster.
As each partition is created in linear time, the total running time is $O(|E|\cdot k)$.

\section{Routing}\label{sec:route}

We consider a compact routing framework, in which every vertex in the graph has a short label (word size), and stores a routing table. Given a vertex $u$ and a label of $v$, using the routing tables starting at $u$ and given only the label of $v$, we should route from $u$ to $v$ quickly. Specifically, we show the following result.

\begin{theorem}\label{thm:route}
Fix any parameter $k$. Any weighted graph $G=(V,E)$ on $n$ vertices with diameter $\Delta$ admits a compact routing scheme, in which the labels are of size $O(k\log_n\Delta)$ and the routing tables are  of size $O(k^2\log_n\Delta)$. For any two vertices $u,v\in V$, the scheme produces routing paths from
 $u$ to $v$ of length  at most $O(kn^{2/k}\log n\cdot d_G(u,v))$.
\end{theorem}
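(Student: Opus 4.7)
The plan is to convert the sparse-cover-based distance labeling of \theoremref{thm:soc} (with sparse-cover parameter $k$) into a compact routing scheme by endowing each per-cluster shortest-path tree with a small-space tree-routing subroutine. First I would replay the setup of \theoremref{thm:soc}: for each scale $i \in \{0,1,\ldots,q\}$ with $q = \lceil k \log_n \Delta \rceil$, invoke \algref{alg:cover} to build an $(8k n^{1/k}, 2k, n^{i/k})$-sparse cover $\mathcal{C}_i$, and pick a shortest-path tree $T_C$ for every cluster $C$ in every cover. For each such $T_C$, install a standard small-space tree-routing subroutine (interval routing built on a heavy-path decomposition is a concrete choice) that attaches an $O(1)$-word routing label to every tree-vertex and stores $O(1)$ words of local routing information per vertex per tree, while routing any message inside $T_C$ along a path of length at most $O(\log n)$ times the $u$-$v$ tree distance.

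Next I would specify labels and tables. The label $L(v)$ is the concatenation, over $i \in \{0,\ldots,q\}$, of the identifier of the cluster $C_i(v) \in \mathcal{C}_i$ that pads $B(v, n^{i/k})$ together with $v$'s tree-routing label inside $T_{C_i(v)}$; this contributes $O(1)$ words per scale, for $O(q) = O(k \log_n \Delta)$ words in total. The routing table at $u$ stores, for every cluster $C$ containing $u$ (indexed in a hash table by the cluster's name), the tree-routing info of $u$ in $T_C$. The overlap bound puts $u$ in at most $2k$ clusters per scale and hence in $O(kq) = O(k^2 \log_n \Delta)$ clusters overall, giving a routing table of $O(k^2 \log_n \Delta)$ words.

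For the routing procedure, $u$ scans the scale-components of $L(v)$ and identifies the smallest index $i$ for which the cluster identifier $C_i(v)$ appearing in $L(v)$ is present in $u$'s own hash table. The padding property forces $u \in B(v, n^{i/k}) \subseteq C_i(v)$ whenever $n^{i/k} \ge d_G(u,v)$, so such an $i$ always exists and its minimality yields $n^{(i-1)/k} < d_G(u,v)$. The message, tagged with $C_i(v)$'s name and $v$'s tree-routing label in $T_{C_i(v)}$, is then forwarded through $T_{C_i(v)}$ using the tree-routing subroutine: every intermediate vertex $w$ along the tree path lies in $C_i(v)$ (because $T_{C_i(v)}$ is an SPT of $G[C_i(v)]$) and therefore finds the matching entry in its own hash table and can forward with only local computation. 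The traversed path stays inside $T_{C_i(v)}$ and has length at most $O(\log n)$ times the $u$-$v$ tree distance in $T_{C_i(v)}$, which in turn is at most $\diam(G[C_i(v)]) \le 8k n^{1/k}\cdot n^{i/k}$; combined with $n^{i/k} \le n^{1/k}\cdot d_G(u,v)$ this yields stretch $O(k n^{2/k} \log n)\cdot d_G(u,v)$.

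The main obstacle I anticipate is realizing the tree-routing subroutine so that its per-vertex-per-tree footprint is genuinely $O(1)$ words while keeping the route short: a naive DFS-interval scheme would force a vertex to store one interval per tree-child and could blow the table size well beyond $O(k^2 \log_n \Delta)$. A heavy-path decomposition circumvents this by reducing each vertex's local state to $O(1)$ entries, at the price of $O(\log n)$ light-edge transitions along the tree path, which is exactly the $\log n$ slack that surfaces in the final stretch. A secondary detail is that all hash tables must be globally keyed by cluster identifiers so that both the source and every intermediate vertex can match the message tag in $O(1)$ time, and the linear scan over scales at $u$ is absorbed into the routing cost since it is dominated by the length of the eventual path.
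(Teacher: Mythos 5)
Your proposal is correct and follows essentially the same route as the paper: extend the scale-indexed sparse covers and per-cluster SPTs of \theoremref{thm:soc}, equip each tree with an $O(1)$-word tree-routing primitive that incurs an $O(\log n)$ stretch (the paper invokes Peleg's Interval Tree Routing, Lemma 26.1.2, as a black box where you spell out a heavy-path-decomposition realization), and route by locating the smallest scale $i$ with $u\in C_i(v)$, yielding the same label/table sizes and the same $O(kn^{2/k}\log n)$ stretch. One small note: the paper's wording asks for $i$ with $v\in C_i(u)$, but your version (test whether $u$ lies in the cluster $C_i(v)$ named in $v$'s label, against $u$'s own hash table) is the one that is actually computable from the stored data and matches the subsequent diameter bound on $C_i(v)$, so your phrasing is the right one.
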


We shall use Interval Tree Routing described in \cite{P00},  lemma 26.1.2.
\begin{theorem}\label{thm:tree-route}
Let $T$ be a tree on $n$ vertices with depth $d$, then there exists a compact routing scheme that uses a single word ($O(\log n)$ bits) as a label and produces paths of length $O(d\log n)$.
\end{theorem}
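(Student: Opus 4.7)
The plan is to realize the routing scheme as interval tree routing based on a DFS pre-order numbering of $T$. I would first root $T$ at an arbitrary vertex and, in a single DFS traversal, assign each $v\in T$ an integer $\mathrm{dfs}(v)\in\{1,\dots,n\}$, which fits in a single $O(\log n)$-bit word and will serve as $v$'s entire label. The key structural property I exploit is the classical fact that the subtree $T_v$ rooted at $v$ consists of exactly those vertices whose DFS pre-order indices lie in a contiguous interval $I_v=[\mathrm{dfs}(v),\mathrm{dfs}(v)+|T_v|-1]$.

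Next I would populate $v$'s routing table with $I_v$, the port toward its parent, and a pair $(I_c,\mathrm{port}(c))$ for each child $c$ of $v$. The forwarding rule at $v$ upon receiving a message with destination label $\ell^*$ is: if some (necessarily unique) child $c$ satisfies $\ell^*\in I_c$, forward along $\mathrm{port}(c)$; otherwise, forward to the parent. Correctness follows by a straightforward induction on $d_T(v,w)$, where $w$ is the destination vertex: we have $w\in T_v$ iff $\ell^*\in I_v$, and in that case $w$ lies in exactly one child's subtree, so the message descends toward $w$; otherwise the message ascends until it reaches the lowest common ancestor of the source $u$ and $w$, after which it descends to $w$ by the same rule.

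The produced routing path coincides with the unique simple $u$--$w$ path in $T$, and therefore has length at most $2d$, which is certainly $O(d\log n)$ as required; the label size is $O(\log n)$ bits by construction. The main (minor) subtlety I anticipate is verifying that the per-hop decision can be taken locally from the routing table: since the child intervals at $v$ are pairwise disjoint, a sorted list or binary search on them identifies the unique matching child, and this does not affect the $O(d\log n)$ path-length bound, which concerns only the length of the walk in $T$ rather than the per-step computation time.
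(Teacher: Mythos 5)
Your scheme is the standard shortest-path interval tree routing (Santoro--Khatib / van Leeuwen--Tan): DFS numbers as labels, one interval per outgoing port, route along the unique $u$--$w$ tree path. The label size and the $2d = O(d\log n)$ path-length bound are fine, and the correctness argument is routine. But the statement is suspicious on its face: if stretch $2$ were achievable with the resources being claimed, the theorem would say so, so the extra $\log n$ factor must be paying for a resource bound that your proposal has silently dropped.

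What you have dropped is the \emph{routing-table} bound. The table you build at $v$ stores an interval for every child of $v$, so it occupies $\Theta(\deg_T(v))$ words, which can be $\Theta(n)$. The paper invokes this lemma inside the proof of \theoremref{thm:route}, where each vertex participates in $O(kq)$ shortest-path trees (with $q=\lceil k\log_n\Delta\rceil$) and the total routing table is claimed to be $O(kq)$ words; that accounting only goes through if the contribution per tree per vertex is $O(1)$ words. With your scheme the per-tree contribution at $u$ is $\deg_T(u)$, and shortest-path trees in a general graph can have vertices of degree $\Theta(n)$, so the $O(k^2\log_n\Delta)$ space bound of \theoremref{thm:route} would not follow. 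Peleg's Lemma 26.1.2, which the paper cites, gives a scheme whose \emph{per-vertex routing table} (not just the label) fits in $O(\log n)$ bits; once you must select the correct child without storing all child intervals, you cannot route along the exact tree path with a single DFS interval, and the $O(\log n)$ blow-up in path length is precisely the price of that constraint. Your proposal needs to be replaced by (or shown to reduce to) a constant-size-\emph{table} scheme, not merely a constant-size-\emph{label} scheme.
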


Using the same framework as \sectionref{sec:simple}, one can extend the distance labels described there to a compact routing scheme with almost the same parameters: we only lose a factor of $O(\log n)$ in the routing time. Each vertex $u$ will have a {\em routing table} of size $O(k\cdot q)$.
 Specifically, for each level $i\in[q]$ and each SPT containing $u$ in this level, the vertex will store the relevant information required for interval-tree routing (see \theoremref{thm:tree-route}). The {\em label} of $u$ will be much shorter, of size $O(q)$: for every $i\in[q]$ store the information only for the SPT created from $C_i(u)$, i.e., the cluster in which $u$ is padded. In order to route from $u$ to $v$, we find an index $i$ such that $v\in C_i(u)$ but $v\notin C_{i-1}(u)$ (this can be done since we have all the information for $u$) and route in the corresponding SPT using interval-tree routing (\theoremref{thm:tree-route}). Recall that $v\notin C_{i-1}(u)$ implies that $d_G(u,v)\ge n^{(i-1)/k}$.

The depth of the SPT is bounded by the diameter of the cluster $C_i(v)$, and $\diam(C_i(v))\le \gamma\cdot n^{i/k}=O(\gamma n^{1/k}\cdot d_G(u,v))$.
(Recall that $\gamma = 8 k n^{1/k}$.)
 So the the length of the routing path  in the tree is $O(\gamma n^{1/k}\log n\cdot d_G(u,v))=O(kn^{2/k}\log n\cdot d_G(u,v))$.

\end{document}